\definecolor{string}{rgb}{0.7,0.0,0.0}
\definecolor{comment}{rgb}{0.13,0.54,0.13}
\definecolor{keyword}{rgb}{0.0,0.0,1.0}
\tikzstyle{vtx}=[circle, inner sep= 0pt, minimum size= 1.2mm, fill]
\newtheorem{te}{Theorem}[section]
\newtheorem{pro}[te]{Proposition}
\newtheorem{de}{Definition}[section]
\newtheorem{co}[te]{Corollary}
\newtheorem{conjecture}{Conjecture}[section]
\newcommand{\beq}{\begin{eqnarray}}
\newcommand{\eeq}{\end{eqnarray}}
\newcommand{\beqs}{\begin{eqnarray*}}
\newcommand{\eeqs}{\end{eqnarray*}}
\newcommand{\ABC}{{\rm ABC}}
\begin{document}
\title{ {Efficient computation of trees \\with minimal atom-bond connectivity index} }
\date{\vspace{-5ex}}
\maketitle
\begin{center}
{\large \bf  Darko Dimitrov}
\end{center}
\baselineskip=0.20in
\begin{center}
{\it Institut f\"ur Informatik, Freie Universit\"{a}t Berlin,
\\ Takustra{\ss}e 9, D--14195 Berlin, Germany}
\\E-mail: {\tt darko@mi.fu-berlin.de}
\end{center}

\baselineskip=0.20in
\vspace{6mm}
\begin{abstract}
The {\em atom-bond connectivity (ABC) index} 
is one of the recently most investigated degree-based molecular structure descriptors, 
that have applications in chemistry.
For a graph $G$, the ABC index  is defined as 
$\sum_{uv\in E(G)}\sqrt{\frac{(d(u) +d(v)-2)}{d(u)d(v)}}$, where $d(u)$ 
is the degree of vertex $u$ in $G$ and $E(G)$ is the set of edges of $G$.
Despite many attempts in the last few years,  it is still an open problem to characterize trees with minimal $ABC$ index.
In this paper, we present an efficient approach of computing trees with minimal
ABC index,  by considering the degree sequences of trees and 
some known properties of trees with minimal $ABC$ index.
The obtained results disprove some existing conjectures and
suggest new ones to be set.

\end{abstract}
%
%
%
\medskip
\section[Introduction]{Introduction and some related results}


Molecular descriptors~\cite{tc-ndc-09} are mathematical quantities that describe the structure or shape of molecules, 
helping to predict the activity and properties of molecules in complex experiments.
Among them, so-called topological indices \cite{db-tird-99} play a significant role.
The topological indices can be classified by the structural properties of graphs used for their calculation.
For example, the Wiener index~\cite{w-rppiams-1948} and the Balaban $J$ index~\cite{b-hddbti-82} are based on the distance of vertices in the respective
graph, the Estrada index \cite{e-c3dms-00} and the energy of a graph~\cite{g-eg-78} 
are based on the spectrum of the graph, the Zagreb group indices~\cite{GT}  and the Randi{\' c} connectivity index~\cite{r-cmb-1975}
depend on the degrees of vertices,  while the Hosoya index \cite{h-ti-1971} is calculated by counting of non-incident
edges in a graph. On the other hand, there is a group of so-called information indices that are based on information functionals~\cite{b-ithiccs-83}. 
More about the information indices and the discriminative power of some established indices,
one can find in~\cite{dgv-iihdpg-12, dk-epge-12, da-hgem-11, fgd-ssdbti-13} and in the works cited therein.

 
Here, we consider a relatively new  topological index  which attracted a lot of attention last few years.
Namely, in 1998, Estrada et al. \cite{etrg-abc-98} proposed a new vertex-degree-based graph  topological index,
the {\em atom-bond connectivity (ABC) index},
and showed that it can be a valuable predictive tool in the study of the heat of formation in alikeness.
Ten years later Estrada~\cite{e-abceba-08} elaborated a novel quantum-theory-like justification for this topological index.
After that revelation, the interest of ABC-index has grown rapidly.
Additionaly, the physico-chemical applicability of the ABC index was confirmed and extended in several studies
\cite{as-abciic-10,cll-abcbsp-13, dt-cbfgaiabci-10, gg-nwabci-10, gtrm-abcica-12, k-abcibsfc-12, yxc-abcbsp-11}.

Let $G=(V, E)$ be a simple undirected graph of order $n=|V|$ and size $m=|E|$.
For $v \in V(G)$, the degree of $v$, denoted by $d(v)$, is the number of edges incident
to $v$.
Then the atom-bond connectivity index of $G$ is defined as
\beq \label{eqn:001}
\ABC(G)=\sum_{uv\in E(G)}\sqrt{\frac{(d(u) +d(v)-2)}{d(u)d(v)}}.
\eeq

As a new and well motivated graph invariant, the ABC index has attracted a lot of interest in the last several years both in 
mathematical and chemical research communities and numerous results and structural properties of ABC index  
were established~\cite{cg-eabcig-11, cg-abccbg-12, clg-subabcig-12, d-abcig-10, dgf-abci-11, dgf-abci-12, ftvzag-siabcigo-2011, fgv-abcit-09, ghl-srabcig-11, gf-tsabci-12, gfi-ntmabci-12, llgw-pcgctmabci-13, vh-mabcict-2012, xz-etfdsabci-2012, xzd-abcicg-2011, xzd-frabcit-2010}.

The fact that adding an edge in a graph strictly increases its ABC index~\cite{dgf-abci-11} 
(or equivalently that deleting an edge in a graph strictly decreases its ABC index~\cite{cg-eabcig-11})  
has  the following two immediate consequences.

\begin{co}
Among all connected  graphs with $n$ vertices, the complete graph $K_n$ has maximal value of ABC index.
\end{co}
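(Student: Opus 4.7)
The plan is to derive this immediately from the monotonicity fact stated just before the corollary: adding an edge to a graph strictly increases its ABC index. Since $K_n$ is obtained from any connected graph $G$ on $n$ vertices by a (possibly empty) sequence of edge additions, the corollary should follow by a one-line iteration argument.

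First I would fix an arbitrary connected graph $G$ on $n$ vertices and compare it to $K_n$. If $G = K_n$ we are done. Otherwise, let $\overline{E} = E(K_n) \setminus E(G)$ be the set of missing edges, and enumerate them as $e_1, e_2, \ldots, e_k$ where $k = \binom{n}{2} - |E(G)| \geq 1$. Define a chain of graphs $G = G_0 \subset G_1 \subset \cdots \subset G_k = K_n$ by setting $G_i = G_{i-1} + e_i$. Each $G_i$ has the same vertex set as $G$, and each step adds exactly one edge.

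Next, I would apply the cited monotonicity result (adding an edge strictly increases the ABC index) to each consecutive pair in the chain, obtaining the strict inequalities
\[
\ABC(G) = \ABC(G_0) < \ABC(G_1) < \cdots < \ABC(G_k) = \ABC(K_n).
\]
Hence $\ABC(G) \leq \ABC(K_n)$ for every connected graph $G$ on $n$ vertices, with equality only when $G = K_n$, which gives the maximality claim.

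There is essentially no obstacle here: the entire content is packaged in the edge-addition monotonicity, and the only subtlety worth mentioning is that the intermediate graphs $G_i$ need not themselves be the connected graphs of interest — but this is irrelevant, since the monotonicity statement applies to arbitrary graphs (and in any case, adding edges to a connected graph preserves connectivity). Thus the corollary requires no substantive argument beyond invoking the previously stated fact.
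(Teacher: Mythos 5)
Your proof is correct and is exactly the argument the paper intends: the corollary is stated as an immediate consequence of the edge-addition monotonicity result from~\cite{dgf-abci-11}, and your chain of single-edge additions from $G$ up to $K_n$ is just the explicit iteration of that fact. The paper gives no further detail, so your write-up simply spells out the same one-line argument.
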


\begin{co}
Among all connected  graphs with $n$ vertices, the graph with minimal ABC index is a tree.
\end{co}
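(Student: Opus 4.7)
The plan is to argue by contradiction using the fact cited just before the corollary, namely that deleting an edge in a graph strictly decreases its ABC index~\cite{cg-eabcig-11}. So let $G$ be a connected graph on $n$ vertices that attains the minimum value of the ABC index over all connected graphs on $n$ vertices, and suppose for contradiction that $G$ is not a tree.

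Since $G$ is connected but not a tree, it contains at least one cycle $C$. First I would pick any edge $e$ lying on $C$. The standard fact from elementary graph theory that I would invoke here is that removing an edge on a cycle leaves the graph connected: any path in $G$ that used $e$ can be rerouted around the remaining portion of $C$, so $G - e$ is still a connected graph on the same vertex set with $n$ vertices. Thus $G - e$ is a valid competitor in the minimization.

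Next, applying the edge-deletion monotonicity property (the parenthetical form of the fact stated before the corollary), we get $\ABC(G - e) < \ABC(G)$ strictly. This contradicts the assumed minimality of $G$ among connected graphs on $n$ vertices. Hence no cycle can exist in $G$, so $G$ must be a tree, proving the corollary.

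The main obstacle is essentially nonexistent here, since the heavy lifting (strict monotonicity of ABC under edge deletion) is quoted from prior work; the only step requiring a moment of care is verifying that the edge removed from the cycle keeps the graph connected, which is a routine application of the cycle property. No new calculation with the defining sum in~(\ref{eqn:001}) is needed.
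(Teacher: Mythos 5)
Your argument is correct and is exactly the intended one: the paper presents this corollary as an immediate consequence of the cited fact that deleting an edge strictly decreases the ABC index, and your contradiction via removing a cycle edge (which preserves connectivity) is precisely that implicit argument made explicit. Nothing is missing.
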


Although it is fairly easy to show that the star graph $S_n$
is a tree with maximal ABC index~\cite{fgv-abcit-09}, despite many attempts in the last years, it is still an open problem
the characterization of trees with minimal ABC index (also refereed as  minimal-ABC trees). 
To accomplish that task,
besides the theoretically proven properties of  the trees with minimal ABC index, computer supported search can be of enormous help.
A good example of that is the work done by Furtula et al.~\cite{fgiv-cstmabci-12}, 
where the trees with minimal ABC index of up to size of $31$ were computed.
There, a brute-force approach of generating all trees of a given order, 
speeded up by using  a distributed computing platform, was applied.

Here, we improve the computer search in two ways. Firstly, we consider only the degree sequences
of trees. We would like to stress that the number of degree sequences of a given length $n$ is significantly smaller than
the number of all trees of order $n$. For example, the number of trees with $32$ vertices is $109 \, 972 \, 410\, 221$ \cite{Sloane-no-trees},
while the number of degree sequences of length $32$ is $5604$ (see Table~\ref{t-enumDS-1}).
Secondly, to speed up the computation, we generate only degree sequences of trees that correspond to 
some known structural properties of the trees with minimal ABC index 
(Propositions~\ref{pro:20}, \ref{pro:30}  and \ref{pro:40} from Section~\ref{trees-with-minABC} ). 
Thus, using a single PC, 
we have identified all trees with minimal ABC of order up to $300$.
The obtained results strengthen the believe that some conjectures are true and reject other conjectures.

In the sequel, we present some additional results and notation that will be used in the rest of the paper.
A vertex of degree one is a {\it pendant vertex}.
As in \cite{gfi-ntmabci-12}, a sequence of vertices of a graph $G$, $S_k=v_0 \, v_1 \dots v_k$,  will be  called a {\it pendant path} if 
each two consecutive vertices in $S_k$ are adjacent in $G$, $d(v_0)>2$, $d(v_i)=2$, for $i=1, \dots, k-1$, and $d(v_k)=1$.
The length of the pendant path $S_k$ is $k$.
A sequence $D=(d_1, d_2, \dots, d_n)$ is {\it graphical} if there is a graph
 whose vertex degrees are $d_i$, $i=1,\dots,n$. If in addition
 $d_1 \geq d_2\geq \dots \geq d_n$, then  $D$ is a
 {\it degree sequence}.
Let ${\bf D_n}$ be the set of all degree sequences of trees of length $n$.

In~\cite{w-etwgdsri-2008} Wang defined a {\em greedy tree} as follows.

\begin{de}[\cite{w-etwgdsri-2008}]\label{def-GT}
Suppose the degrees of the non-leaf vertices are given, the greedy tree is achieved by the following `greedy algorithm':
\begin{enumerate}
\item Label the vertex with the largest degree as $v$ (the root).
\item Label the neighbors of $v$ as $v_1, v_2,\dots,$ assign the largest degree available to them such that $d(v_1) \geq d(v_2) \geq \dots$
\item Label the neighbors of $v_1$ (except $v$) as $v_{11}, v_{12}, \dots$ such that they take all the largest
degrees available and that $d(v_{11}) \geq d(v_{12}) \geq . . .$ then do the same for $v_2, v_3,\dots$
\item Repeat 3. for all newly labeled vertices, always starting with the neighbors of the labeled vertex with largest whose neighbors are not labeled yet.
\end{enumerate}
\end{de}
\noindent

\noindent
The following result  by Gan, Liu and You~\cite{gly-abctgds-12} characterizes the trees with minimal ABC index with prescribed degree sequences. 

\begin{te}\label{thm-DS}
Given the degree sequence, the greedy tree minimizes the ABC index.
\end{te}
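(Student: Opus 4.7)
The plan is to prove Theorem~\ref{thm-DS} via a submodularity property of the ABC edge-weight together with a subtree-swap argument. Let $f(x,y)=\sqrt{(x+y-2)/(xy)}$, so that $\ABC(T)=\sum_{uv\in E(T)} f(d(u),d(v))$. The crux is the following discrete submodularity lemma: for all integers $a\le b$ and $p\le q$ in the relevant degree range,
\[
f(a,p)+f(b,q)\ \le\ f(a,q)+f(b,p),
\]
with strict inequality when $a<b$, $p<q$, and none of the four arguments equals $2$. The underlying analytic statement $\partial^{2} f/\partial x\,\partial y\le 0$ on the continuous domain $x,y\ge 1$ can be verified by a direct computation from $f^{2}=1/x+1/y-2/(xy)$, noting that $f(2,\cdot)\equiv 1/\sqrt 2$ is constant, which explains the degree-$2$ degeneracy.

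Given the submodularity lemma, I would proceed by an extremal argument. Let $T^{*}$ be any tree realizing the minimum ABC index over trees with degree sequence $D$, and root $T^{*}$ at a vertex $r$ of maximum degree. For any two vertices $u,v$ in $T^{*}$ whose subtrees are disjoint (neither is an ancestor of the other), with parents $p$ and $p'$ respectively, the \emph{subtree swap} deletes the edges $pu,p'v$ and adds the edges $pv,p'u$; this yields a new tree with the same degree sequence $D$, since the internal structures of the two subtrees are unchanged and only the parent-edge endpoints are rewired. The net change in ABC is
\[
f(d_{p},d_{v})+f(d_{p'},d_{u})-f(d_{p},d_{u})-f(d_{p'},d_{v}),
\]
which, by the submodularity lemma, is strictly negative whenever $d_{p}<d_{p'}$ and $d_{u}<d_{v}$ (outside the degree-$2$ degeneracy).

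I would then argue by induction on BFS levels from $r$ that $T^{*}$ must coincide with the greedy tree. At level one, the children of $r$ are forced to be exactly the $d(r)$ vertices of largest degree in $D\setminus\{d(r)\}$, since any other assignment would produce a ``crossed'' pair of parent-child degrees that the swap above would strictly improve, contradicting minimality of $T^{*}$. Iterating through the BFS order prescribed by Definition~\ref{def-GT}, at each newly processed vertex the children must receive the largest unassigned degrees, so the structure of $T^{*}$ matches the greedy construction level by level.

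The main obstacle is the degree-$2$ degeneracy: whenever a degree-$2$ vertex is involved in the swap the submodularity inequality becomes an equality, so a priori some non-greedy trees could tie with the greedy one. Resolving this requires an auxiliary argument showing that such ties correspond merely to sliding internal degree-$2$ paths and thus do not produce genuinely distinct minimizers beyond the greedy one up to isomorphism. A secondary subtlety is that the greedy ordering depends not only on immediate children but on the full BFS priority (grandchildren, great-grandchildren, and so on), so one must carry out the induction carefully when comparing subtrees rooted at different levels; the swap lemma itself generalizes verbatim to non-adjacent pairs $(u,v)$, so this is essentially bookkeeping rather than a new ingredient.
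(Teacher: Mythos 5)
First, note that the paper does not prove Theorem~\ref{thm-DS} at all: it is quoted from Gan, Liu and You (and, independently, Xing and Zhou), so your proposal can only be measured against those published proofs, whose overall strategy --- a four-point exchange inequality for $f(x,y)=\sqrt{(x+y-2)/(xy)}$ combined with subtree switches on a rooted minimizer --- is indeed the one you adopt. Your key lemma is correct: writing $g=f^2=(x+y-2)/(xy)$ one finds $2gg_{xy}-g_xg_y=-(xy+2x+2y-4)/(x^3y^3)<0$ for $x,y\ge 1$, so $f_{xy}<0$ wherever $g>0$, and the discrete inequality $f(a,p)+f(b,q)\le f(a,q)+f(b,p)$ holds with \emph{strict} inequality whenever $a<b$ and $p<q$. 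In particular the ``degree-$2$ degeneracy'' you single out as the main obstacle is not one: $f(2,\cdot)$ being constant kills $f_y$ at $x=2$ but not the mixed second derivative, so strictness of the four-point inequality survives. The only ties you must live with are the unavoidable ones with $a=b$ or $p=q$, and they are harmless because the theorem asserts only that a greedy tree attains the minimum, not uniqueness.

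The genuine gap is elsewhere: your swap is defined only for $u,v$ with disjoint subtrees, but the induction forces you to handle the case where the high-degree vertex $v$ that ought to be promoted is a \emph{descendant} of the low-degree child $u$ it should replace --- for instance, a root of degree $5$ with children of degrees $4,4,4,4,2$, where the degree-$2$ child has a child of degree $3$. There the only misplaced pair is ancestor-related, so no disjoint swap is available; and deleting $pu$ and $p'v$ and adding $pv,\,p'u$ when $p'$ lies inside the subtree of $u$ disconnects the tree or creates a cycle. Your closing remark that the lemma ``generalizes verbatim to non-adjacent pairs'' addresses adjacency, not ancestry, which is the actual obstruction. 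Repairing this needs a different operation --- e.g.\ bubbling $v$ up by repeated parent--child exchanges along the $u$--$v$ path, each step of which must be checked to preserve the degree sequence and not increase $\ABC$ --- and this, together with the tie-breaking by deeper BFS levels when two prospective parents have equal degree, is precisely where the published proofs spend their effort. As written, your level-one induction step already fails on the example above.
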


\noindent
The same result as in Theorem~\ref{thm-DS}, using slightly different notation and approach, was obtained by
Xing and Zhou \cite{xz-etfdsabci-2012}.
Since the  Theorem~\ref{thm-DS} plays a crucial role in our computation, the first important issue
is how  to enumerate efficiently degree sequences of trees. This problem is considered in the next section. 

\section[Enumerating degree sequences of trees]{Enumerating degree sequences of trees}\label{sec:DS}

There exist several algorithms for enumerating degree sequences of graphs. 
A comprehensive source of references of such algorithms can be found in \cite{ilms-eghha-11}.
Clearly, each of those algorithms can be used for  enumerating degree sequences of trees just by considering only the degree sequences
with sum of degrees equals to $2n - 2$, where $n$ is the length of the degree sequences. However, this is not an efficient approach, 
because most of the generated degree sequences are not degree sequences of trees.
For an illustration,  the number of all degree sequences of length $29$ is $2 \,022 \,337 \,118 \,015 \,338$ \cite{Sloane-dsequnce},  
while the number of degree sequences that correspond to trees of order $29$ is $3010$ (see Table~\ref{t-enumDS-1}).
Thus, it is not a surprise that the largest reported enumerated degree sequences of graphs was only of length $29$, 
with running time of $6733$ days, distributed to 200 PCs containing about 700 cores \cite{ilms-eghha-11}.
Since we are not aware of an algorithm specialized only for  enumerating degree sequences of trees,
we present  such an algorithm in this section.
Our algorithm is related to the algorithm of enumerating degree sequences of graphs presented by 
Ruskey et al \cite{Ruskey94alleycats}, and exploit the so called ``reverse search'', a term originated by Avis and Fukuda \cite{af-rse-96}. 
Therefore, in the sequel, we will adopt the notation used in \cite{Ruskey94alleycats}.
The main result, on which our algorithm is based, is the following characterization of a degree sequence of a tree.

\begin{te} \label{tree-DS-characterization}
A sequence of integers $D=(d_1, d_2, \cdots, d_n)$, 
with $n-1 \geq d_1 \geq d_2 \geq \cdots \geq d_{n-m} > d_{n-m+1}= \cdots =d_n=1$, 
is the degree sequence of a tree if and only if  $C=(c_1, c_2,  \cdots, c_{n-d_{n-m}+1})$ 
is the degree sequence of a tree, where
\beq \label{eq-10}
c_i = \left\{ \begin{array}{ll}
        d_i      & \quad i \leq n-m-1;\\
       1 & \quad otherwise.\end{array} \right.  
\eeq
\end{te}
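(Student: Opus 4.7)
The plan is to reduce the statement to the classical characterization that a non-increasing sequence of $N$ positive integers of length $N$ is the degree sequence of a tree if and only if its entries sum to $2N-2$. Once this characterization is in hand, the theorem becomes a purely algebraic identity relating the sums of $D$ and $C$.

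First I would verify that $C$ is a well-defined non-increasing sequence of positive integers. The leading entries $d_1,\dots,d_{n-m-1}$ are non-increasing and each is at least $d_{n-m}\geq 2$, while the remaining $|C|-(n-m-1)=m-d_{n-m}+2$ entries are ones; thus $C$ is a valid non-increasing sequence of positive integers provided the count of trailing ones is non-negative, i.e.\ $d_{n-m}\leq m+2$. In the forward direction this bound follows from the averaging inequality $(n-m)\, d_{n-m} \leq \sum_{i=1}^{n-m} d_i = 2n-2-m$, and in the backward direction it is built into the fact that $C$ is presented as a sequence of the prescribed length with the prescribed entries.

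Next I would compute the sum of the entries of $C$:
\[
\sum_{i=1}^{|C|} c_i \;=\; \sum_{i=1}^{n-m-1} d_i \;+\; (m-d_{n-m}+2).
\]
Since $\sum_{i=1}^{n} d_i = \sum_{i=1}^{n-m-1} d_i + d_{n-m} + m$, substitution gives
\[
\sum_{i=1}^{|C|} c_i \;=\; \sum_{i=1}^{n} d_i \;-\; 2d_{n-m} + 2.
\]
On the other hand, $2|C|-2 = 2(n-d_{n-m}+1)-2 = 2n-2d_{n-m}$, so the condition $\sum c_i = 2|C|-2$ is precisely equivalent to $\sum d_i = 2n-2$.

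Applying the classical characterization to $D$ (of length $n$) and to $C$ (of length $n-d_{n-m}+1$) then yields the claimed equivalence. The overall argument is essentially careful bookkeeping; the only mildly delicate point is ensuring that $m-d_{n-m}+2\geq 0$, which is what lets the algebraic identity run symmetrically in both directions and is the sole place where the combinatorial constraints on $D$ (as opposed to a generic sequence with $m$ trailing ones) enter the argument.
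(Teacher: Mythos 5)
Your proof is correct, but it takes a genuinely different route from the paper's. You reduce the statement to the classical fact that a sequence of $N\geq 2$ positive integers is the degree sequence of a tree if and only if its entries sum to $2N-2$, after which the equivalence is pure bookkeeping: $\sum_i c_i = \sum_i d_i - 2d_{n-m}+2$ together with $2|C|-2 = 2n-2d_{n-m}$ shows that $\sum_i c_i = 2|C|-2$ holds exactly when $\sum_i d_i = 2n-2$. Your arithmetic checks out, as does the one delicate point about $C$ having $m-d_{n-m}+2\geq 0$ trailing ones; in fact in the forward direction one gets the cleaner bound $d_{n-m}\leq m$ from $\sum_{i=1}^{n-m-1}d_i\geq 2(n-m-1)$, so $C$ has at least two ones, as any tree degree sequence must. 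The paper instead argues constructively and graph-theoretically in both directions: given a tree $T_C$ realizing $C$, it attaches $d_{n-m}-1$ new pendant vertices to a leaf of $T_C$ to realize $D$; conversely, given a tree $T_D$ realizing $D$, it first rewires edges so that the vertex of degree $d_{n-m}$ acquires $d_{n-m}-1$ pendant neighbours (reattaching the subtrees hanging off its non-leaf neighbours to a distant pendant vertex, an operation that preserves the degree sequence) and then deletes those pendant neighbours. Your argument is shorter and arguably cleaner, at the cost of importing the classical characterization as a black box; the paper's argument is self-contained and exhibits the explicit tree surgery that mirrors the map $f_i$ and its inverse driving the enumeration algorithm in the rest of the section.
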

\begin{proof}
Let $T_C$ be a tree with degree sequence $C=(c_1, c_2,  \cdots, c_{n-d_{n-m}+1})$, with 
$c_1 \geq c_2 \geq \cdots \geq c_{n-m-1} > c_{n-m}= \cdots =c_{n-d_{n-m}+1}=1$ and $c_{n-m-1} \geq d_{n-m} \geq 2$,
satisfying~(\ref{eq-10}).

To prove the easier direction of the equivalence,
just add $d_{n-m}-1$ pendant vertices to a pendant vertex of $T_C$,
obtaining a tree $T_D$. The degree sequence that corresponds to $T_D$ is
$D=(d_1, d_2, \cdots, d_n)$, with $n-1 \geq d_1 \geq d_2 \geq \cdots \geq d_{n-m} > d_{n-m+1}= \cdots =d_n=1$.

The other direction of the equivalence, we prove as follows.
Let $D=(d_1, d_2, \cdots, d_n)$, with $n-1 \geq d_1 \geq d_2 \geq \cdots \geq d_{n-m} > d_{n-m+1}= \cdots =d_n=1$,
be a degree sequence of a tree  $T_D$. Let $v_{n-m}$ be the vertex with degree $d_{n-m}$.
If  $v_{n-m}$ has $d_{n-m}-1$ pendant vertices, then delete them obtaining the tree $T_C$.
If this is not a case, i.e., $v_{n-m}$ has $d >1$ adjacent vertices of degree bigger than one that comprised a 
set $U=\{ u_1, u_2,\dots,  u_d \}$.
Let $U_1$ be a set of adjacent vertices to $u_1$. 
First, delete all edges between $u_1$ and  vertices in $U_1\setminus \{v_{n-m}\}$ and 
add edges between vertices in $U_1\setminus \{v_{n-m}\}$ and a pendant vertex whose distance to $u_1$ is bigger
than its distance to any other vertex in $U$.
Notice that $T_D$ has more than $d_{n-m}$ pendant vertices, therefore such pendant vertex must exists.
Repeat the same as for $u_1$, for the rest of the vertices $ u_2, u_3, \dots,  u_d$, considering one vertex per step until $v_{n-m}$ has $d_{n-m}-1$ pendant vertices,
 obtaining a tree $T_D'$. 
Observe that $T_D'$ has the same degree sequence as $T_D$.
Finally, in $T_D'$ delete all  $d_{n-m}-1$ pendant vertices adjacent to $v_{n-m}$, obtaining the tree $T_C$.

\end{proof}
\noindent
Let $\bf S_i$ be the set of all sequences $D_i=(d_1, d_2, \cdots, d_i)$, with fixed length $i$ , where $1<i \leq n$ and
$n-1 \geq d_1 \geq d_2 \geq \cdots \geq d_{h_i} > d_{h_i+1}= \cdots =d_i=1$.
Notice that, $d_{h_i}$ denotes the smallest degree in $D_i$ larger than one.
Define a function $f_i: {\bf S_i} \times d_{h_i} \rightarrow {\bf S_{i-d_{h_i} +1}} \times d_{h_i-1} $ such that for  a given $D_i \in \bf S_i$,
 and $C= (c_1, c_2, \cdots, c_{i-c_{h_i} +1})$, it holds that   $(C, c_{h_i-1})=f_i(D_i, d_{h_i})$ if
$$
c_k = \left\{ \begin{array}{ll}
        d_k      & \quad k \leq {h_i} -1;\\
       1 & \quad otherwise.\end{array} \right. 
$$

By Theorem~\ref{tree-DS-characterization} and definition of the function $f_i$,
we have the following two corollaries.

\begin{co}
For $i>0$ and $D_i \in {\bf S_i}$, the sequence $D_i \in {\bf D_i}$
if and only if $f_i(D_i, d_{h_i})=(D_{i-d_{h_i}+1}, d_{h_i-1}) \in {\bf D_{i-d_{h_i}+1}}$.
\end{co}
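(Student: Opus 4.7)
The plan is to derive this corollary directly from Theorem~\ref{tree-DS-characterization}, treating the function $f_i$ as a repackaging of the sequence transformation used in that theorem (with $i$ playing the role of $n$ and $h_i$ playing the role of $n-m$).

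First I would verify that any $D_i = (d_1,\dots,d_i) \in {\bf S_i}$ meets the hypotheses of Theorem~\ref{tree-DS-characterization}: by the definition of ${\bf S_i}$ the sequence is non-increasing with $d_1 \le i-1$, and by the definition of $h_i$ the trailing 1's begin exactly at position $h_i+1$, so $d_{h_i}$ is the analogue of $d_{n-m}$. Next I would compare the output sequence $C=(c_1,\dots,c_{i-d_{h_i}+1})$ delivered by $f_i$ with the sequence defined in~(\ref{eq-10}): both set $c_k=d_k$ for $k \le h_i-1$, both set $c_k=1$ otherwise, and both have length $i-d_{h_i}+1$, so the two sequences coincide. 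The second coordinate $d_{h_i-1}$ in the codomain of $f_i$ is merely bookkeeping, recording the smallest non-unit entry of $C$ (namely $c_{h_i-1}=d_{h_i-1}$), which by construction lies in the allowed range and hence makes $(C,d_{h_i-1})$ an element of ${\bf S_{i-d_{h_i}+1}}\times d_{h_i-1}$ whenever it is defined.

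With the correspondence established, the biconditional of Theorem~\ref{tree-DS-characterization} gives exactly the claim: $D_i$ is a tree degree sequence, i.e.\ $D_i \in {\bf D_i}$, if and only if $C$ is a tree degree sequence, i.e.\ the first component of $f_i(D_i,d_{h_i})$ lies in ${\bf D_{i-d_{h_i}+1}}$. The only subtlety, and the step I expect to take the most care, is the degenerate case $h_i=1$, in which $D_i$ is the degree sequence of a star $(i-1,1,\dots,1)$ and $d_{h_i-1}$ is not defined in the stated way. I would handle this separately by checking that $f_i$ sends this star sequence to the length-$1$ (or length-$2$) trivial sequence, which is trivially a tree degree sequence, so the equivalence still holds at the base; alternatively one can restrict the domain of $f_i$ to sequences with $h_i \ge 2$ and declare the star case an explicit base of the recursion used in the enumeration algorithm.
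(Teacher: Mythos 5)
Your proposal is correct and matches the paper's treatment: the paper states this corollary without a separate proof, as an immediate consequence of Theorem~\ref{tree-DS-characterization} and the definition of $f_i$, which is exactly the identification (with $i$ for $n$ and $h_i$ for $n-m$) that you spell out. Your extra attention to the degenerate star case $h_i=1$ is a reasonable refinement that the paper leaves implicit.
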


\begin{co} \label{tree-DS-characterization-2}
Let $C=(c_1, c_2, \dots, c_{h_i}\dots, c_{i-z}, c_{i-z+1})  \in  \bf D_{i-z +1}$, with $c_{h_i}$ the smallest degree bigger than 1, and
$2\leq z  \leq c_{h_i}$.
The sequence $D_i=(d_1, d_2, \dots, d_{i}) \in f_i^{-1}(C, c_{h_i})$ if and only if 
$$
d_k = \left\{ \begin{array}{ll}
        c_k      & \quad k \leq h_i;\\
        z   & \quad k = h_i+1; \\
       1 & \quad otherwise.\end{array} \right. 
$$
\end{co}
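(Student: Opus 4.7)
The plan is straightforward definition-chasing in both directions of the equivalence; the work sits entirely in matching up indices between the sequences $C$ and $D_i$ and their respective ``$h$'' positions, so I do not expect any serious obstruction.

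For the forward implication, I would start from the piecewise definition of $D_i$ and first check that it genuinely lies in ${\bf S_i}$: the first $h_i$ entries agree with those of $C$ and are therefore nonincreasing and strictly larger than $1$; the $(h_i+1)$-st entry is $z$, with $d_{h_i}=c_{h_i}\ge z\ge 2>1=d_{h_i+2}$ using precisely the hypothesis $2\le z\le c_{h_i}$; and the remaining entries are all $1$. Hence $h_{D_i}=h_i+1$ and $d_{h_{D_i}}=z$. Substituting into the defining formula of $f_i$ produces a sequence of length $i-z+1$ whose first $h_i$ entries coincide with $c_1,\dots,c_{h_i}$ and whose remaining entries are $1$, i.e.\ exactly $C$; the second coordinate is $d_{h_{D_i}-1}=d_{h_i}=c_{h_i}$, as required.

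For the reverse implication, suppose $D_i\in f_i^{-1}(C,c_{h_i})$ and set $h:=h_{D_i}$, $z':=d_h$. By the definition of $f_i$, the output has length $i-z'+1$, its first $h-1$ entries are $d_1,\dots,d_{h-1}$, and all later entries equal $1$. Comparing with $C$, whose last non-one entry sits in position $h_i$, forces $h-1=h_i$, and therefore $d_k=c_k$ for $k\le h_i$ and $d_k=1$ for $k>h_i+1$. The middle entry $z'=d_{h_i+1}$ satisfies $z'\ge 2$ because it is a non-one degree, and $z'\le d_{h_i}=c_{h_i}$ by monotonicity of $D_i$; matching lengths $i-z'+1=|C|=i-z+1$ finally pins down $z'=z$, so $D_i$ has exactly the claimed shape.

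The only real pitfall is keeping the two indexing conventions straight: $h_i$ denotes the position of the last non-one entry in $C$, whereas $h_{D_i}$ denotes the corresponding position in $D_i$, and the core observation of the proof is that these two indices differ by exactly one. Once this is nailed down, both directions reduce to direct substitution into the definition of $f_i$; Theorem~\ref{tree-DS-characterization} and its preceding corollary are only needed if one additionally wishes to conclude that the $D_i$ produced in the forward direction lies in ${\bf D_i}$ rather than merely in ${\bf S_i}$.
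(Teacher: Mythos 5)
Your proof is correct and matches the paper's intent: the paper states this corollary without proof, as an immediate consequence of the definition of $f_i$ (and Theorem~\ref{tree-DS-characterization}), and your argument is precisely that definition-chasing written out, with the one genuinely delicate point — that the position of the last non-one entry shifts by exactly one between $C$ and $D_i$ — handled correctly in both directions. Your closing remark that the theorem itself is only needed to upgrade membership from ${\bf S_i}$ to ${\bf D_i}$ (which is the content of the preceding corollary) is also accurate.
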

%
\noindent
The following example illustrates Corollary~\ref{tree-DS-characterization-2}:
$$
f^{-1}(65111111111) \supseteq \{655111111111111, 65411111111111, 6531111111111, 652111111111\}.
$$

\bigskip
\noindent
One may straightforwardly implement Corollary~\ref{tree-DS-characterization-2} using recursion
to enumerate degree sequences of trees.
Our C++ implementation 
was run on $2.3$ GHz Intel Core i$5$ processor with $4$GB 1333 MHz DDR3 RAM.
The performance of the algorithm is presented in Table~\ref{t-enumDS-1}. 


\begin{table}[width=\columnwidth, http]
\caption{Performance of the algorithm for enumerating degree sequences of trees.
For degree sequences of length $n$,
$S(n)$ denotes the number of degree sequences, $T(n)$ denotes the total running time
and $S(n)/T(n)$ denotes the amortized running time for generating a sequence.}
\begin{center}
\begin{tabular}{||r|r|r|r||r|r|r|r||}
\hline
\hline
$n$  & $S(n)$ & $T(n)$  & $T(n)/S(n)$[ms] & $n$  & $S(n)$ & $T(n)$ & $T(n)/S(n)$[ms] \\
\hline
\hline
21  &    490   &   3.031ms  &  0.00618571 &  34  &     8349  & 0.041s   & 0.00502455\\
\hline
22  &   627    &   3.989ms & 0.00636204 &  35  &       10143 &   0.051s  & 0.00501538\\
\hline
23  &    792   &  4.875ms   & 0.00615530 & 40&  26015& 0.098s & 0.00378359\\
\hline
24  &    1002   &  6.082ms     & 0.00606986 &50  & 147273 & 0.612s & 0.00415907\\
\hline
25  &    1255   &  7.884ms   & 0.00628207 &  60&  715220 & 3.230s & 0.00451627 \\
\hline
26  &    1575   &   9.996ms  & 0.00634667 & 70& 3087735 & 15.300s & 0.00495518 \\
\hline
27  &    1958   &  13.083ms   & 0.00668182 & 80&  12132164 & 1m3s & 0.00520250 \\
\hline
28  &     2436  &  14.434ms  & 0.00592529 & 90&  44108109& 4m5s & 0.00556759  \\
\hline
29  &     3010  &   20.086ms  & 0.00667309 & 100 & 150198136 &  14m27s& 0.00577865 \\
\hline
30  &    3718   &  18.821ms   & 0.00506213 & 110 & 483502844 & 51m26s & 0.00635496 \\
\hline
31  &    4565   &  23.031ms & 0.00504513  & 120& 1482074143 & 2h39m8s &  0.00647126\\
\hline
32  &       5604  &   29.523ms  & 0.00526820 &  130&  4351078600 & 7h36m43s & 0.00629813\\
\hline
33  &       6842 &  33.430ms    & 0.00488600 & 140&  12292341831& 21h7m44s &0.00618793 \\
\hline
\hline
\end{tabular}
\end{center}
\label{t-enumDS-1}
\end{table}

The time needed for generating a degree sequence indicates that algorithm runs in 
constant amortized time, so the total running time of the whole program is $O(S(n))$.
Similarly as in the case of general graphs~\cite{Ruskey94alleycats}, it may be 
rather difficult to determine the precise complexity of the algorithm as function of $n$, 
which remains an open problem.

We have tested and analyzed our enumeration algorithm of  degree sequences up to $n=140$,
for which the algorithm  ran about 21 hours. On our single processor platform,
 we would expect that for enumerating degree sequences of length $160$ will run about $7$ days. 
One can improve 
the computation by considering the distribution of the graphical sequences according 
to their first element. That can help to design an algorithm that computes 
the new values of $S(n)$ by  slicing of the computations belonging to a given value of $n$,
similarly as it was done in~\cite{ilms-eghha-11}.

Having all degree sequences of a particular length, in the next section we proceed to determine
the trees with minimal ABC index. 





%

\section[Trees with minimal atom-bond connectivity index]{Trees with minimal atom-bond connectivity index} \label{trees-with-minABC}

Our algorithm  of identifying the trees with minimal ABC index is comprised of the following steps:

\begin{enumerate}
\item Enumerate all degree sequences as described in Section~\ref{sec:DS}.
\item Find corresponding `greedy trees' for each generated degree sequence applying Theorem~\ref{thm-DS}.
\item Calculate the ABC index of each `greedy tree' and select the tree with minimal value.
\end{enumerate}

Computationally, step $1.$ is the most expensive one, and as pointed in Section~\ref{sec:DS}, it can be parallelized to some extend
to run in distributed framework.
Steps $2.$ and $3.$ can be implemented  straightforwardly and there computational cost is linear with the respect to the length of 
the degree sequence.
Although this approach is computationally superior to the computer assisted search presented in~\cite{fgiv-cstmabci-12},
it can be considerably improved by enumerating only the degree sequences that satisfy
the following structural properties of the minimal ABC trees.  

\begin{pro}[\cite{gfi-ntmabci-12}] \label{pro:20}
If $n \geq 10$, then the $n$-vertex tree with minimal ABC index does not contain pendent paths of length $k  \geq 4$.
\end{pro}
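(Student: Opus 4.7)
The plan is to argue by contradiction: assume $T^*$ is an $n$-vertex tree ($n\geq 10$) with minimal ABC index that nonetheless contains a pendant path $P: v_0 v_1 \ldots v_k$ of length $k\geq 4$, and then construct a tree $T'$ on the same vertex set with $\ABC(T') < \ABC(T^*)$, contradicting the minimality of $T^*$. Write $d := d(v_0)\geq 3$ throughout.

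The first observation I would use is that every edge along a pendant path contributes exactly $1/\sqrt{2}$ to the ABC index: the three possible degree-pairs along $P$ are $(d,2)$, $(2,2)$, and $(2,1)$, and in each case one verifies
\[
\sqrt{\frac{d+2-2}{2d}} \;=\; \sqrt{\frac{2+2-2}{4}} \;=\; \sqrt{\frac{2+1-2}{2}} \;=\; \frac{1}{\sqrt{2}}.
\]
Hence $P$ contributes exactly $k/\sqrt{2}$ to $\ABC(T^*)$, independently of $d$ and of the internal structure of the path.

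Next, I would introduce the main local transformation: remove the edge $v_{k-3}v_{k-2}$ and add the edge $v_0 v_{k-2}$, producing the tree $T'$. In $T'$ the vertex $v_0$ has degree $d+1$ and carries two pendant paths at $v_0$, one of length $k-3$ (namely $v_0 v_1 \ldots v_{k-3}$) and one of length $3$ (namely $v_0 v_{k-2}v_{k-1}v_k$). The total number of vertices is preserved, and by the observation above the $k$ path-edges of $T'$ still contribute exactly $k/\sqrt{2}$. Consequently the entire change in ABC is concentrated in the $d-1$ edges joining $v_0$ to its other neighbors $u_1,\ldots,u_{d-1}$:
\[
\ABC(T') - \ABC(T^*) \;=\; \sum_{i=1}^{d-1}\left[\sqrt{\frac{(d+1)+d(u_i)-2}{(d+1)\,d(u_i)}} - \sqrt{\frac{d+d(u_i)-2}{d\,d(u_i)}}\right].
\]
Setting $h_a(x) = \sqrt{(x+a-2)/(xa)}$ and differentiating in $x$ shows that $h_a$ is strictly decreasing when $a\geq 3$, constant when $a=2$, and strictly increasing when $a=1$. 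Thus each summand is negative, zero, or positive according as $d(u_i)$ is $\geq 3$, equal to $2$, or equal to $1$.

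The delicate step is closing the argument. If $v_0$ has at least one neighbor $u_i$ of degree $\geq 3$, I would quantify both signs and show that the single negative term dominates the at most $d-2$ positive ones, giving $\ABC(T')<\ABC(T^*)$. If every neighbor of $v_0$ other than $v_1$ has degree in $\{1,2\}$ --- the genuinely hard sub-case, in which $T^*$ is locally a \emph{spider} at $v_0$ --- the split at $v_0$ may fail, and one must either apply the analogous split at a different branching vertex (whose existence follows from $n\geq 10$ together with elementary structural constraints on minimal-ABC trees) or modify the transformation, for example by first rerouting a pendant neighbor of $v_0$ to elongate a shorter branch before splitting. The main obstacle I anticipate is precisely this sub-case: showing that in a minimal tree the neighborhood of the base $v_0$ of a long pendant path cannot be so degenerate that no feasible modification yields a strict ABC decrease.
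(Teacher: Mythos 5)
First, note that the paper you are working from does not prove Proposition~\ref{pro:20} at all: it is quoted verbatim from Gutman, Furtula and Ivanovi\'c \cite{gfi-ntmabci-12}, so there is no in-paper argument to compare against. Judged on its own terms, your proposal sets up the natural local move and correctly isolates which edges change, but it has three concrete defects, only one of which you acknowledge. (i) Your bookkeeping of the path edges fails for $k=4$: the split $(k-3)+3$ then produces a ``pendant path of length $1$'', i.e.\ the edge $v_0v_1$ acquires degree pair $(d+1,1)$ and contributes $\sqrt{d/(d+1)}>1/\sqrt{2}$, so the $k$ path edges do \emph{not} still contribute $k/\sqrt{2}$. (This is repairable: split $k=4$ as $2+2$, and generally as $2+(k-2)$.) (ii) Your claim that when some $u_i$ has degree $\geq 3$ ``the single negative term dominates the at most $d-2$ positive ones'' is quantitatively false. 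For a neighbor of degree $3$ the gain is
\[
\sqrt{\tfrac{d+2}{3(d+1)}}-\sqrt{\tfrac{d+1}{3d}}\approx-\tfrac{1}{2\sqrt{3}\,d(d+1)},
\]
while a single neighbor of degree $1$ costs
\[
\sqrt{\tfrac{d}{d+1}}-\sqrt{\tfrac{d-1}{d}}\approx+\tfrac{1}{2d(d+1)},
\]
which is larger in absolute value; e.g.\ for $d=3$ one computes $-0.0212+0.0495>0$, so the transformation \emph{increases} the ABC index. Excluding pendant neighbors of $v_0$ is essentially Proposition~\ref{pro:40}, which was proved later and builds on Proposition~\ref{pro:20}, so invoking it here risks circularity.

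(iii) The case you flag as ``delicate'' is in fact fatal to a purely local argument: if every $u_i$ has degree exactly $2$, your computation gives $\ABC(T')=\ABC(T^*)$ exactly, so no contradiction with minimality is obtained. Indeed, the spider with legs $4,2,\dots,2$ and the spider on the same vertex set with legs $2,2,\dots,2$ have \emph{identical} ABC index (every edge in both contributes $1/\sqrt{2}$), so no transformation anchored at $v_0$ can strictly decrease the index there; one must instead exhibit a globally better tree (or argue that ties can be broken toward a representative without long pendant paths, which proves a weaker ``there exists a minimal tree such that\dots'' statement than the one asserted). So the proposal is a reasonable opening but is not a proof: the heart of the matter --- the spider/equality case and the pendant-neighbor case --- is exactly what remains, and your suggested fixes (``split at a different branching vertex'', ``reroute a pendant neighbor'') are not carried out and would themselves require the quantitative estimates above.
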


\begin{pro}[\cite{gfi-ntmabci-12}] \label{pro:30}
If $n \geq 10$, then the $n$-vertex tree with minimal ABC index contains at most one pendent path of length $k =3$.
\end{pro}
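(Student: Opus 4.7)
The plan is to argue by contradiction. Suppose $T$ is a minimal-$\ABC$ tree on $n \ge 10$ vertices that contains two distinct pendant paths of length $3$, say $P_1 = u_0 u_1 u_2 u_3$ and $P_2 = w_0 w_1 w_2 w_3$ with $d(u_0), d(w_0) \ge 3$, $d(u_1)=d(u_2)=d(w_1)=d(w_2)=2$ and $d(u_3)=d(w_3)=1$. I will exhibit a tree $T'$ on the same vertex set with $\ABC(T') < \ABC(T)$, contradicting the minimality of $T$.

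Before doing any surgery, I would record three elementary facts about the edge-weight function $f(a,b):=\sqrt{(a+b-2)/(ab)}$: (i) $f(a,2)=\sqrt{1/2}$ for every $a \ge 1$, so each of the six edges of $P_1 \cup P_2$ contributes exactly $\sqrt{1/2}$, independently of $d(u_0)$ and $d(w_0)$; (ii) for $b \ge 3$, $f(\cdot,b)$ is strictly decreasing in its first argument; (iii) for $b=1$, $f(\cdot,b)$ is strictly increasing in its first argument. In particular, the two length-$3$ pendant paths contribute exactly $6\sqrt{1/2}$ to $\ABC(T)$, regardless of the degrees of their roots.

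The core step is a local modification that preserves the tree property and the vertex count, converts one of the length-$3$ pendant paths into a length-$2$ pendant path plus a pendant edge attached to a suitably chosen higher-degree vertex, and strictly decreases $\ABC$. I would split into two cases according to whether $u_0 = w_0$ or $u_0 \ne w_0$; in both cases, the surgery is a pendant-edge swap that alters the degree sequence by decreasing one degree-$2$ entry to $1$ while increasing one other degree by $1$. By observation (i), the six path edges still contribute $6\sqrt{1/2}$ after the swap, so the sign of $\ABC(T') - \ABC(T)$ is determined by a short list of edges incident to the vertex whose degree was incremented — a much simpler comparison than a full ABC recomputation. When $u_0 \ne w_0$, I would first use an intermediate ``merging'' swap that transfers $P_2$ from $w_0$ to $u_0$, reducing to the case $u_0=w_0$ on an intermediate tree.

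The hardest part is the sign analysis in this last step: the change picks up a positive term from every pendant neighbor of the incremented vertex (fact (iii)) and a negative term from every neighbor of degree at least $3$ (fact (ii)), and one must show the negative terms dominate. Theorem~\ref{thm-DS} is the key structural tool here: since $T$ is minimal-$\ABC$, it is the greedy tree of its own degree sequence, and the greedy-tree construction forces the neighbors of each branching vertex to have the largest available degrees. Combined with the hypothesis $n \ge 10$, which guarantees that there is ``enough tree'' above the two pendant paths for the greedy-tree constraint actually to produce enough degree-$\ge 3$ neighbors of $u_0$ (and of $w_0$ in the second case), this yields $\ABC(T') < \ABC(T)$, contradicts the minimality of $T$, and proves the proposition.
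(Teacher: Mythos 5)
First, note that the paper does not prove Proposition~\ref{pro:30} at all: it is quoted verbatim from \cite{gfi-ntmabci-12}, so there is no in-paper argument to compare yours against. Judged on its own merits, your proposal has the right general shape (a local surgery on the two length-$3$ tails, followed by a sign analysis using the monotonicity of $f(a,b)=\sqrt{(a+b-2)/(ab)}$ in its first argument, which is indeed constant for $b=2$, increasing for $b=1$ and decreasing for $b\ge 3$), but the step you yourself identify as the hardest one is where the argument genuinely breaks. Your surgery increments the degree of one vertex $v$ by one and turns one degree-$2$ vertex into a leaf, so $\ABC(T')-\ABC(T)=\sum_{y}\bigl(f(d(v)+1,d(y))-f(d(v),d(y))\bigr)$ summed over the non-tail neighbors $y$ of $v$. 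This quantity is \emph{zero}, not negative, whenever all such neighbors have degree $2$, and Theorem~\ref{thm-DS} cannot rule this situation out: the greedy tree of the degree sequence $(d,2,2,\dots,2,1,\dots,1)$ is a spider whose center has only degree-$2$ neighbors, and a spider with two legs of length $3$ and the rest of length $2$ has exactly the same ABC index as the spider with all legs of length $2$ obtained by your transformation. So the hypothesis ``$T$ is minimal, hence greedy'' does not deliver a single degree-$\ge 3$ neighbor of $u_0$, the strict inequality $\ABC(T')<\ABC(T)$ fails, and the contradiction evaporates precisely in the case that must be excluded. (You also cannot lean on Proposition~\ref{pro:40} to kill pendant neighbors of $v$, since that result postdates and is logically independent of Proposition~\ref{pro:30}; and your intermediate ``merging'' swap for the case $u_0\ne w_0$ changes $d(u_0)$ and $d(w_0)$ with no control on its own sign, so the two-step composition is not shown to decrease $\ABC$ either.)

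To close the gap one has to do what the original proof in \cite{gfi-ntmabci-12} does: treat the degenerate configurations explicitly, comparing the offending tree not with a single incremental modification of itself but with concretely different candidate trees (e.g., for the spider case, a tree with two adjacent branching vertices), and verify strict inequalities by direct numerical evaluation of the finitely many edge types involved. In other words, the missing content is exactly the case analysis that your soft appeal to Theorem~\ref{thm-DS} and to ``$n\ge 10$'' was meant to replace; as written, the proof does not go through.
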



\begin{pro}[\cite{llgw-pcgctmabci-13}] \label{pro:40}
If $n \geq 10$, then each pendent vertex of the $n$-vertex tree $G$ with minimal ABC index belongs to a pendent path of length $k , \;2 \leq k \leq 3$.
\end{pro}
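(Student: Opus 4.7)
The plan is a proof by contradiction. Suppose $T$ is a minimal-ABC tree on $n\geq 10$ vertices that contains a pendant vertex $v$ whose unique neighbor $u$ has degree $k:=d(u)\geq 3$, i.e., a pendant path of length $1$. By Proposition~\ref{pro:20} every pendant path of $T$ has length at most $3$, so because $n\geq 10$ the tree must carry several pendant paths, and in particular several pendants besides $v$. In each of a small number of local configurations I would exhibit a tree $T'$ on the same vertex set with $\ABC(T')<\ABC(T)$, contradicting minimality.

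A useful preliminary observation is the monotonicity of the per-edge contribution $f(a,b):=\sqrt{(a+b-2)/(ab)}$ in its first argument: for $b=1$ it is strictly increasing in $a$, for $b=2$ it is constant equal to $1/\sqrt 2$, and for $b\geq 3$ it is strictly decreasing. This controls what happens to every edge incident to a vertex whose degree is changed by one. The argument then splits along the local structure at $u$. In Case~1 the vertex $u$ has a second pendant neighbor $v'$; remove the edge $uv'$ and add the edge $vv'$, so that $\{v,v'\}$ becomes a length-$2$ pendant path hanging off $u$ while $d(u)$ drops from $k$ to $k-1$. The net change in $\ABC$ decomposes into (a) the replacement of $2f(k,1)$ by $f(k-1,2)+f(2,1)=\sqrt 2$, which contributes $\sqrt 2-2\sqrt{(k-1)/k}<0$ for every $k\geq 3$, plus (b) the sum of the differences $f(k-1,d(x))-f(k,d(x))$ over the remaining $k-2$ neighbors $x$ of $u$. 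Using the monotonicity formula (vertices of degree $2$ contribute nothing to (b)) one checks that term (a) dominates (b) for every admissible neighborhood of $u$.

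In Case~2, $v$ is the only pendant neighbor of $u$. Invoking Proposition~\ref{pro:20} and, if necessary, Proposition~\ref{pro:30}, I would locate a pendant path $P=u'w_1\cdots w_\ell p$ of length $\ell\in\{2,3\}$ disjoint from $u$, and define $T'$ by deleting $uv$ and attaching $v$ to the terminal pendant $p$ of $P$. Inside the block spanned by $P\cup\{uv\}$ the net change equals $1/\sqrt 2-\sqrt{(k-1)/k}<0$ for every $k\geq 3$ (the edge $w_{\ell-1}p$ contributes $f(2,2)=f(2,1)=1/\sqrt 2$ before and after, the new edge $pv$ adds $1/\sqrt 2$, and the edge $uv$ disappears), while the effect on the remaining edges at $u$ is once more controlled by the monotonicity of $f$ in its first coordinate.

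The principal obstacle in both cases is the bookkeeping in term (b): when $u$ has neighbors of very large degree, the positive increments $f(k-1,d(x))-f(k,d(x))$ could in principle offset the gain from the local swap, and a uniform estimate must be established. Verifying this cleanly, possibly through a finer case distinction according to the degree multiset of $u$'s neighborhood and invoking the structural restrictions on $T$ provided by Propositions~\ref{pro:20} and~\ref{pro:30}, is the technical heart of the argument.
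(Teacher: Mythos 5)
First, a point of reference: this paper does not prove Proposition~\ref{pro:40}; it imports it from Lin, Lin, Gao and Wu \cite{llgw-pcgctmabci-13}, so there is no in-paper proof to compare yours against and your sketch has to stand on its own. Your Case~1 does stand: writing $f(a,b)=\sqrt{(a+b-2)/(ab)}$, the local loss $2\sqrt{(k-1)/k}-\sqrt{2}$ is at least $0.218$ (its minimum, at $k=3$), while the total damage on the remaining $k-2$ edges at $u$ is at most $(k-2)\bigl(1/\sqrt{k-1}-1/\sqrt{k}\bigr)\leq 0.159$ (maximized near $k=5$), since degree-$1$ neighbours help, degree-$2$ neighbours are neutral, and $f(k-1,d)-f(k,d)$ increases to $1/\sqrt{k-1}-1/\sqrt{k}$ as $d\to\infty$. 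So that swap always strictly decreases the index.

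Case~2, however, contains a genuine gap, not deferred bookkeeping: the inequality you need is false for configurations that Propositions~\ref{pro:20} and~\ref{pro:30} do not exclude, so the proposed relocation can strictly \emph{increase} the ABC index. Take $d(u)=3$ with pendant neighbour $v$ and two further neighbours of degree $4$. Moving $v$ to the tip $p$ of a remote pendant path changes the index by $2\bigl(f(2,4)-f(3,4)\bigr)+\bigl(f(2,1)-f(3,1)\bigr)=2\bigl(1/\sqrt{2}-\sqrt{5/12}\,\bigr)-\bigl(\sqrt{2/3}-1/\sqrt{2}\,\bigr)\approx 0.1232-0.1094=+0.0138>0$; the same happens whenever $d(u)=3$ and the two non-pendant neighbours have degree at least $4$. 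No refinement of the estimate rescues this move; one needs a structurally different transformation for such neighbourhoods, or a separate proof that they cannot occur in a minimal tree, and that is essentially the hard content of \cite{llgw-pcgctmabci-13}. A secondary, repairable issue: Propositions~\ref{pro:20} and~\ref{pro:30} bound pendant-path lengths only from above, so the existence of a pendant path of length $2$ or $3$ away from $u$ is not automatic and needs an argument (for instance, a leaf count showing that if no vertex has two pendant neighbours then some pendant path has length at least $2$, with Case~1 applied elsewhere otherwise). As written, the sketch does not establish the proposition.
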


%



\noindent
Considering all these results, 
that reduce significantly the number of degree sequences, we have implemented an algorithm that identifies trees with minimal ABC index.
On our single processor platform, we have calculated all trees with minimal ABC index of order up to $300$ in about 15 days,
which is a significant improvement over the previous similar work~\cite{fgiv-cstmabci-12}, where a grid infrastructure of about 400 CPUs 
was used to identify all trees with minimal ABC index of order up to $31$.
All obtained trees with minimal  ABC index are summarized in Figures~\ref{fig-conjecture-1}~and~\ref{fig-conjecture-2}. 
For the sake of completeness, we include also the results for $7 \leq n \leq 31$, which were already  obtained in \cite{fgiv-cstmabci-12, gfi-ntmabci-12}.
For $n \leq 6$, the minimal ABC trees are paths $P_n$ and they are omitted in the figures.

\begin{figure}[http]
\begin{center}
\includegraphics[scale=0.77]{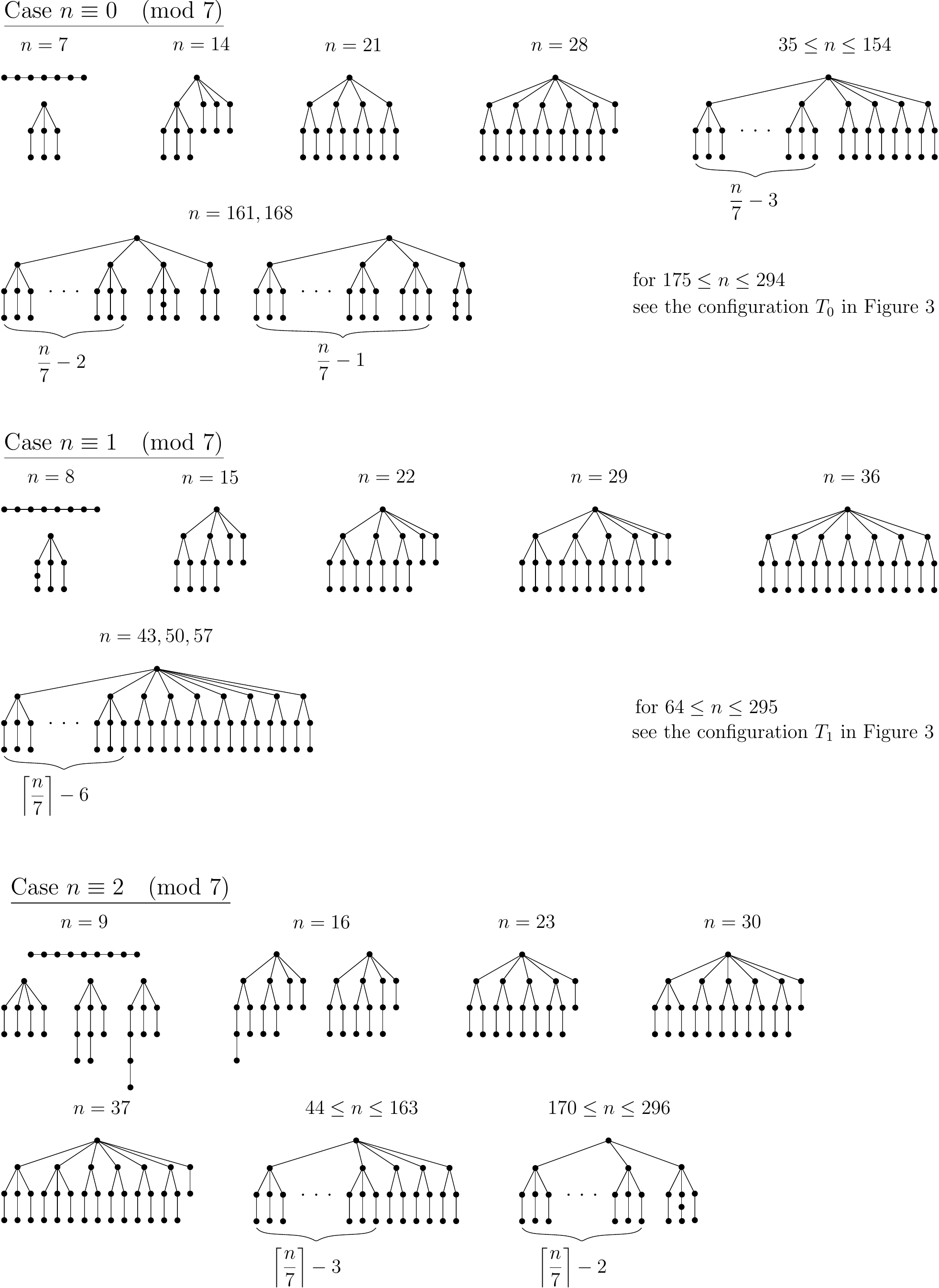}
\caption{Trees of order $n$, $7 \leq n \leq 300$, with minimal ABC index obtained by computer search - 
cases $n \equiv 0,1,2 \pmod{7}$.}
\label{fig-conjecture-1}
\end{center}
\end{figure}

\begin{figure}[http]
\begin{center}
\includegraphics[scale=0.77]{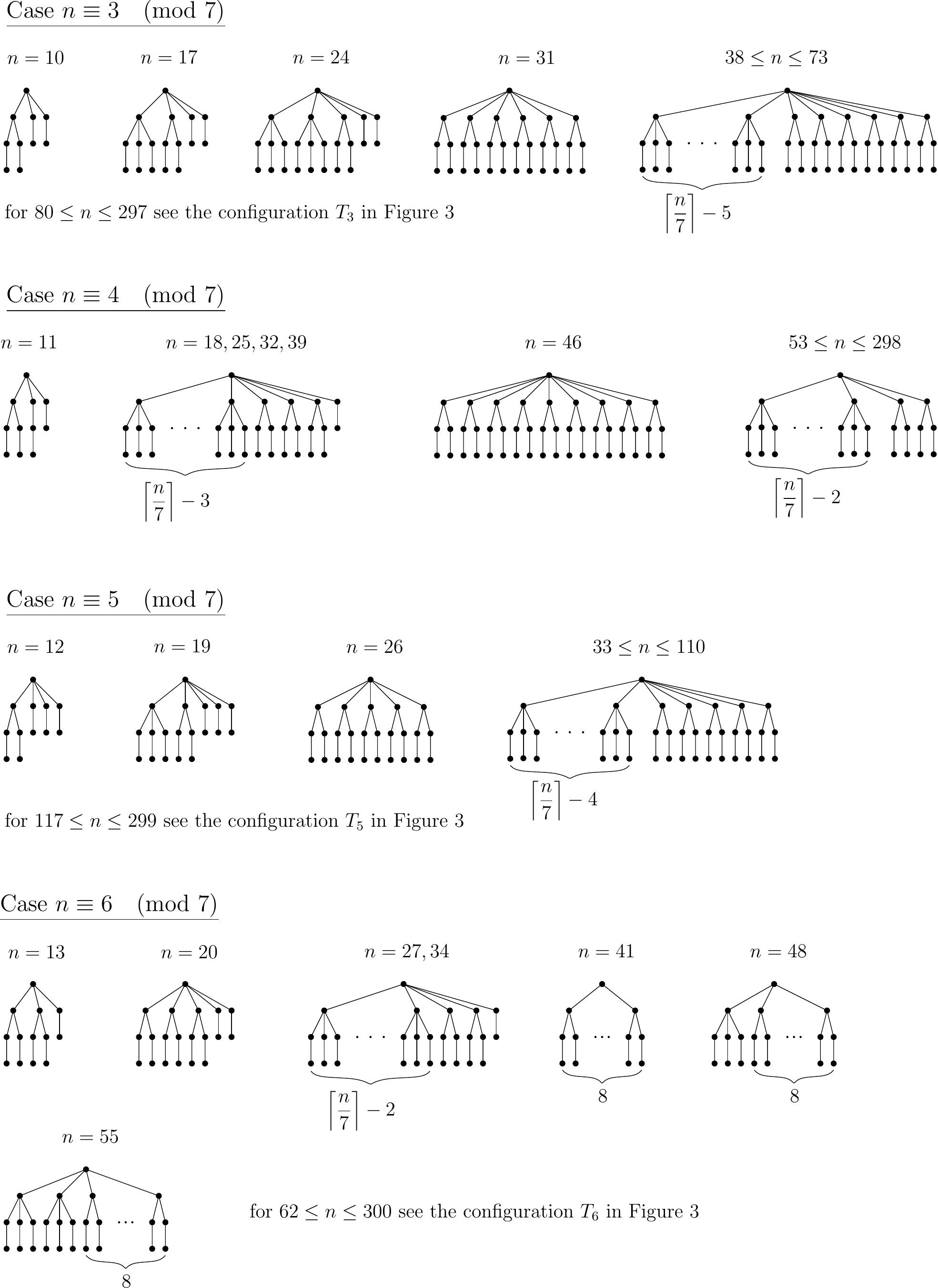}
\caption{Trees of order $n$, $7 \leq n \leq 300$, with minimal ABC index obtained by computer search - 
cases $n \equiv 3, 4, 5, 6 \pmod{7}$.}
\label{fig-conjecture-2}
\end{center}
\end{figure}

From the examples in Figures~\ref{fig-conjecture-1}~and~\ref{fig-conjecture-2}, 
the following observation can be made:
\smallskip
\noindent
If $\Delta_1$  is a maximal degree of a minimal ABC-tree with $n$ vertices, 
and $\Delta_2$  is a maximal degree of a minimal ABC-tree with $n+1$ vertices,
 then $\Delta_2 =   \Delta_1 \pm 1$. 
\smallskip
\noindent
Note that this observation is not necessarily true
for larger minimal ABC-trees. However, applying the above restriction, one can significantly speed up the calculations.
In this way we were able to obtain results that lead to disprovement of some existing conjectures.
The plausible structural computational model  and its refined version in \cite{fgiv-cstmabci-12}, 
is based on the main assumption that  the minimal ABC tree posses a single {\it central vertex},
or said with other words, it is based  on the assumption that 
the vertices of a minimal ABC tree of degree $\geq 3$ induce a star graph.
The configuration $T_4$ in Figure~\ref{fig-conjecture}, for $n \equiv 4 \pmod{7}$ and $n \geq 312$,
is an counterexample to that assumption.

Applying the above constrain on the maximal degree of a minimal ABC-tree,
we have also obtained another counterexample.
Namely, the configuration $T_2$ in Figure~\ref{fig-conjecture} indicates different structure of the
minimal ABC trees, in the case when $n \equiv 2 \pmod{7}$ and $n \geq 1185$, than the
structure suggested for this case in \cite{fgiv-cstmabci-12, gf-tsabci-12}.

In meantime, independently to this work,
the above two counterexamples  were discovered and recently published.
Namely, the counterexample $T_4$ in Figure~\ref{fig-conjecture} was published in \cite{ahs-tmabci-13},
and the counterexample $T_2$ in Figure~\ref{fig-conjecture} was published in \cite{ahz-ltmabci-13}.
These two counterexamples, as well as the most already obtained results on this topic, 
will appear in a review article  \cite{gfahsz-abcic-2013}.

As a consequence to these counterexamples, we present a revised version of  the conjecture 
by Gutman and Furtula \cite{gf-tsabci-12}  about the trees with minimal ABC index.
Before we state it, it is worth to mention that the original versions of the conjecture by Gutman and Furtula, 
with slightly corrections, but still supporting the idea of existence of a central vertex, 
 was shown to be true for the so-called {\em Kragujevac trees}~\cite{hag-ktmabci-14}.



\begin{conjecture} \label{conj-mod7}
Let $G$ be a tree with minimal ABC index among all trees of size $n$.
\begin{itemize}
\item[(i)] If $n \equiv 0 \pmod{7}$ and $n \geq 175$, then $G$ has the structure $T_0$ depicted in Figure~\ref{fig-conjecture}.
\item[(ii)] If $n \equiv 1 \pmod{7}$ and $n \geq 64$, then $G$ has the structure $T_1$ depicted in Figure~\ref{fig-conjecture}.
\item[(iii)] If $n \equiv 2 \pmod{7}$ and $n \geq 1185$, then $G$ has the structure $T_2$ depicted in Figure~\ref{fig-conjecture}.
\item[(iv)] If $n \equiv 3 \pmod{7}$ and $n \geq 80$, then $G$ has the structure $T_3$ depicted in Figure~\ref{fig-conjecture}.
\item[(v)]  If $n \equiv 4 \pmod{7}$ and $n \geq 312$, then $G$ has the structure $T_4$ depicted in Figure~\ref{fig-conjecture}.
\item[(vi)]  If $n \equiv 5 \pmod{7}$ and $n \geq 117$, then $G$ has the structure $T_5$ depicted in Figure~\ref{fig-conjecture}.
\item[(vii)]  If $n \equiv 6 \pmod{7}$ and $n \geq 62$, then $G$ has the structure $T_6$ depicted in Figure~\ref{fig-conjecture}.
\end{itemize}
\end{conjecture}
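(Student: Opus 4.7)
The strategy rests on reducing the problem, via Theorem~\ref{thm-DS}, to an optimisation over degree sequences, and then using the structural results of Section~\ref{trees-with-minABC} to cut the admissible degree sequences down to a small parametric family amenable to direct analysis.

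Concretely, fix $n$ and let $T$ be a minimal-ABC tree on $n$ vertices. Propositions~\ref{pro:20}, \ref{pro:30} and \ref{pro:40} together force every pendant of $T$ to lie on a pendant path of length $2$ or $3$, with at most one pendant path of length $3$. Denote the pendant paths of lengths $2$ and $3$ as $B_2$- and $B_3$-branches; their attachment points, together with the remaining vertices of degree $\geq 3$, form a ``skeleton'' $S$. The degree sequence of $T$ is then determined by (i) the number $p$ of $B_2$-branches and the indicator of whether a $B_3$-branch is present, (ii) the multiset of degrees of the internal (degree $\geq 3$) vertices, and (iii) the shape of $S$. The handshake identity $\sum_{v} d(v) = 2n-2$ together with the leaf count furnish linear relations among these parameters, substantially reducing the degrees of freedom.

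Next, I would evaluate the ABC index as a sum over the few edge types that survive these reductions: internal--internal, internal--degree-$2$, degree-$2$--degree-$2$ (only inside the single $B_3$-branch), and degree-$2$--pendant edges. For each residue class $r = n \bmod 7$ the candidate tree $T_r$ of Figure~\ref{fig-conjecture} provides explicit counts of each edge type as an affine function of $n$, hence a closed form for $\ABC(T_r)$. To prove optimality of $T_r$ one would (a) enumerate the finitely many competing skeleton topologies and parameter choices compatible with the above reductions; (b) for each competitor, express $\ABC$ as a closed form in $n$; and (c) show that every competitor beats $T_r$ only for $n$ below the stated threshold. Step (c) can be organised through ``local moves'' such as merging two degree-$3$ branchpoints into a single degree-$4$ branchpoint, or swapping a $B_2$-branch for a $B_3$-branch, and controlling the resulting ABC difference by a Taylor expansion of the edge weight $\phi(x,y) = \sqrt{(x+y-2)/(xy)}$.

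The main obstacle is step (a): ruling out exotic skeleton topologies. The counterexamples $T_2$ and $T_4$ already show that the skeleton need not be a star, so one cannot simply postulate a unique central vertex. The very large threshold $n \geq 1185$ in case (iii) signals a genuinely delicate trade-off between the $T_2$ skeleton and its closest competitor, and the fact that the thresholds vary wildly between residue classes suggests that no uniform asymptotic argument will suffice; a residue-class-by-residue-class comparison seems unavoidable. Even a prerequisite step---bounding the maximum degree of a minimal-ABC tree uniformly in $n$, which is needed to finitize the candidate set of skeletons---is itself open, and sharpening the existing degree bounds is likely the first substantial subtask before the programme above can be completed.
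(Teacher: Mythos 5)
There is a genuine gap, and in fact a more basic mismatch: the statement you are asked about is a \emph{conjecture}, and the paper contains no proof of it. The author's only support for it is computational --- all minimal-ABC trees of order up to $300$ are found by enumerating the degree sequences compatible with Propositions~\ref{pro:20}, \ref{pro:30} and \ref{pro:40}, building the greedy tree for each sequence via Theorem~\ref{thm-DS}, and taking the minimum --- together with further heuristic searches (restricting how the maximum degree can change with $n$) that produced the configurations $T_2$ and $T_4$ and the large thresholds such as $n \geq 1185$. Your opening reduction (degree sequences, greedy trees, pendant paths of length $2$ or $3$ with at most one of length $3$) is exactly the reduction the paper performs, but the paper uses it only as a filter for a finite computation, not as the first step of a proof.

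Your proposal, read as a proof, does not close any of the gaps that keep this a conjecture, and you candidly name them yourself: there is no uniform bound on the maximum degree of a minimal-ABC tree, so the set of candidate ``skeletons'' is not finite a priori; step (a), ruling out exotic skeleton topologies, is not carried out (and the counterexamples $T_2$ and $T_4$ show the natural star-skeleton assumption is false, so this step cannot be waved away); and step (c), the residue-class-by-residue-class comparison of closed forms, is never executed --- the thresholds $175, 64, 1185, 80, 312, 117, 62$ are outputs of the computation, not of any analysis you supply. What you have written is a reasonable research programme whose every substantive step remains open, which is precisely why the paper states the claim as Conjecture~\ref{conj-mod7} rather than as a theorem. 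A correct ``review'' answer here is that the statement cannot currently be proved by your outline or by the paper's methods; at best you could verify it for each fixed $n$ by the paper's algorithm.
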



\begin{figure}[h]
\begin{center}
\includegraphics[scale=0.750]{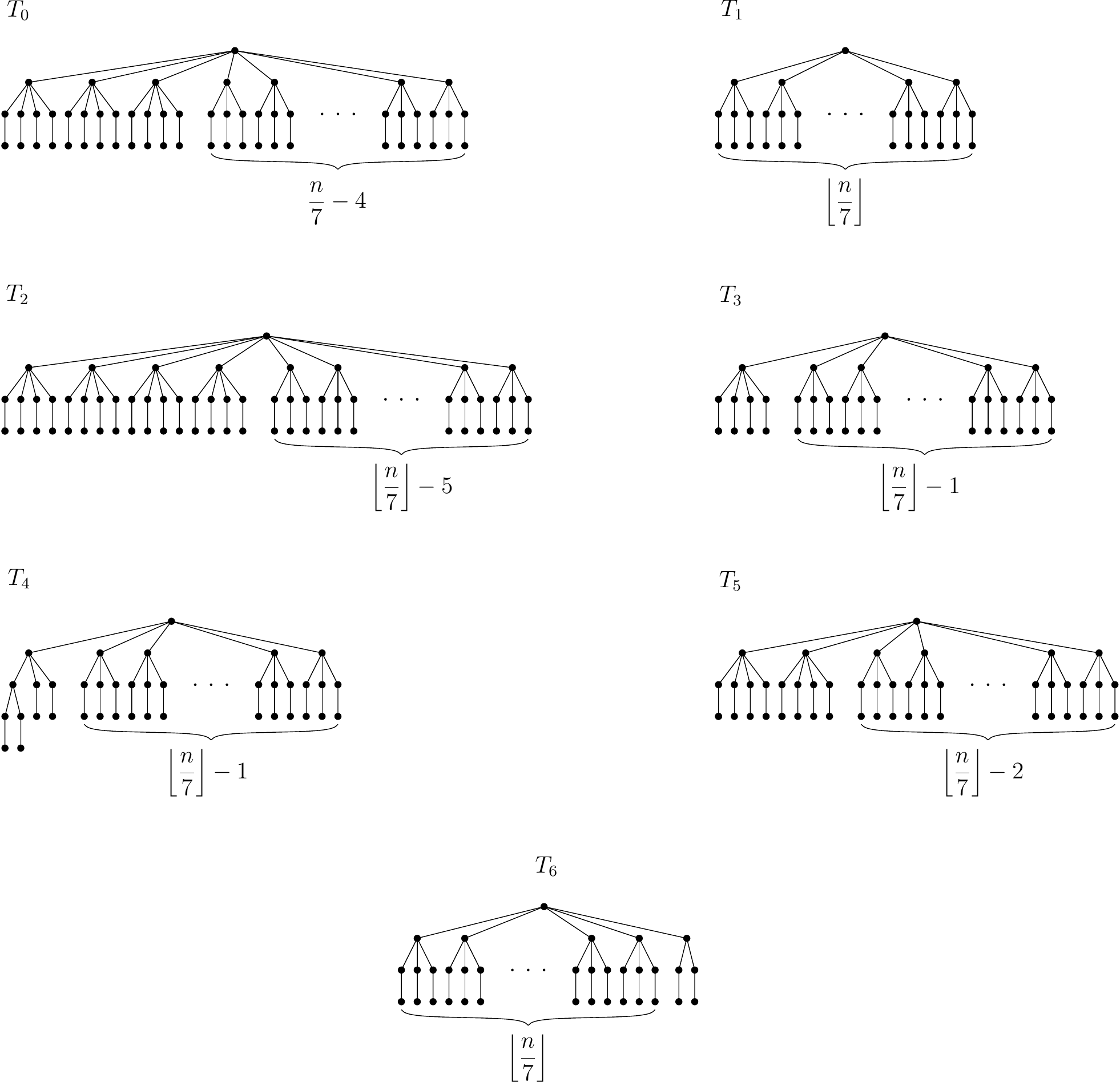}
\caption{Types of trees with minimal ABC index that correspond to Conjecture~ \ref{conj-mod7}.}
\label{fig-conjecture}
\end{center}
\end{figure}
%
\noindent
The obtained computational results also indicate the following extension of the Proposition~\ref{pro:30}.
\begin{conjecture} \label{conj-no-pendant-paths}
A minimal ABC tree of order $n >1178$ does not contain a pendant path of length three.
\end{conjecture}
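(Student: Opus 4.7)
Suppose, aiming at a contradiction, that $T$ is a minimal ABC tree on $n>1178$ vertices that contains a pendant path of length three, say $u_0 u_1 u_2 u_3$ with $d(u_0)=\delta\geq 3$, $d(u_1)=d(u_2)=2$ and $d(u_3)=1$. By Propositions~\ref{pro:20}, \ref{pro:30} and \ref{pro:40}, this is the unique pendant path of length $\geq 3$ in $T$, and every other pendant vertex of $T$ is the end of a pendant path of length exactly two. By Theorem~\ref{thm-DS}, $T$ is the greedy tree for its degree sequence, so the high-degree vertices of $T$ are arranged as tightly as possible.

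My plan is an exchange argument: construct a tree $T^*$ on the same $n$ vertices such that $\ABC(T^*)<\ABC(T)$. The first ingredient is a local bookkeeping: because every edge of a pendant path of length two or three has one endpoint of degree $2$, each such edge contributes exactly $1/\sqrt{2}$ to the ABC sum (independent of $\delta$). Therefore, shortening the length-three pendant path to a length-two pendant path (say by suppressing $u_2$, identifying $u_3$ with $u_1$'s spare slot, or simply deleting $u_3$) saves a weight of $1/\sqrt{2}$ in the ABC sum and releases one vertex that must be reattached elsewhere while preserving all three structural properties.

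The second ingredient is a surgery that absorbs this released vertex into the interior of $T$. Since every pendant vertex of $T$ other than $u_3$ lies on a length-two pendant path, $T$ contains a large pool of "buds": internal vertices $w$ of degree $k\geq 3$ whose neighbours are the inner vertices of length-two pendant paths. The idea is to pick such a $w$ (of degree $k$ with several length-two pendant paths attached) and to split it into two vertices $w_1,w_2$ joined by an edge, redistributing the buds so that precisely one additional vertex is consumed; this keeps the total vertex count at $n$ and creates no new length-three pendant path. The resulting change $\ABC(T)-\ABC(T^{*})$ reduces to a closed-form expression in $\delta$, $k$ and the degrees of a bounded number of neighbours of $w$, because only finitely many edges are affected by the surgery. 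One then verifies that this closed form is strictly positive as soon as $n>1178$, with the threshold arising from the moment the saving $1/\sqrt{2}$ from the path shortening outweighs the net ABC cost of the bud split.

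The main obstacle is the selection in Step~2: one must guarantee that a bud $w$ of the right degree profile exists in every admissible $T$. The seven conjectured optimal structures $T_0,\ldots,T_6$ of Conjecture~\ref{conj-mod7} differ in how their high-degree vertices are laid out, so in principle a separate surgery may be needed for each residue class $n\bmod 7$; the case $n\equiv 2\pmod 7$ is presumably the most delicate, which is consistent with the specific threshold $1178=1185-7$ inherited from Conjecture~\ref{conj-mod7}(iii). A cleaner alternative, and the one I would prefer, is to phrase the argument entirely at the level of degree sequences: show that any degree sequence whose greedy tree (in the sense of Definition~\ref{def-GT}) exhibits a length-three pendant path can be transformed into a "nearby" degree sequence of the same length whose greedy tree has ABC index strictly smaller, once $n$ exceeds the stated bound. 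Combined with Theorem~\ref{thm-DS}, this would yield the conclusion without case-splitting on $n\bmod 7$.
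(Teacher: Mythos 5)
This statement is a \emph{conjecture} in the paper, not a theorem: the author offers no proof, only computational evidence (all minimal ABC trees up to order $300$, plus heuristically restricted searches beyond that), and explicitly presents it as an extension of Proposition~\ref{pro:30} suggested by the data. So any complete proof you gave would go beyond the paper; unfortunately, what you have written is a plan rather than a proof, and the plan has a genuine gap at exactly the point where the difficulty lives. Your bookkeeping for the path-shortening step is fine (every edge on a pendant path of length two or three has an endpoint of degree $2$, hence contributes $1/\sqrt{2} = \sqrt{(d(u)+d(v)-2)/(d(u)d(v))}$ whether the other endpoint has degree $1$, $2$, or $\delta$, so deleting $u_3$ saves exactly $1/\sqrt{2}$). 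But the compensating surgery --- reinserting the freed vertex by ``splitting a bud $w$'' --- is never specified concretely, its ABC cost is never computed, and the assertion that the net change ``is strictly positive as soon as $n>1178$'' is exactly the claim to be proved, not a verification. A purely local surgery affects a bounded number of edges, so its cost depends only on the local degrees involved, not on $n$; the threshold can therefore only enter through knowledge of which local degree configurations actually occur in a minimal ABC tree of order $n$, i.e.\ through the global structure of the optimizer --- which is the open problem. Your own observation that $1178=1185-7$ points at Conjecture~\ref{conj-mod7}(iii), which is itself unproven, so the argument as sketched is circular: it needs the conjectured structure to certify the existence of the required bud $w$ and to evaluate the sign of the exchange.

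Two further cautions. First, the existence of a suitable $w$ ``of the right degree profile'' in \emph{every} admissible minimal tree is not guaranteed by Propositions~\ref{pro:20}--\ref{pro:40} alone; those only constrain pendant paths, not the arrangement of the high-degree vertices (and the paper's own counterexamples $T_2$, $T_4$ show that the high-degree vertices need not form a star, so one cannot assume a single central vertex with many attached $B_k$-branches). Second, your preferred reformulation at the level of degree sequences is attractive but faces the same obstruction: deciding which degree sequence minimizes the ABC index over all of ${\bf D_n}$ is precisely what is not known, and Theorem~\ref{thm-DS} only resolves the problem \emph{within} a fixed degree sequence. In short, the proposal is a reasonable heuristic for why the conjecture should hold, consistent with the computations in the paper, but it does not constitute a proof and cannot be completed by routine verification of the steps as stated.
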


We would like to note that our computations show only minor 
violation of the assumption about the central vertex.
However, to determine  how big this violation is, is still an open problem.
The computations here also strengthen the already existing believe, supported 
along by the computational model in \cite{fgiv-cstmabci-12}, that the minimal ABC tree is unique
for trees of order larger than $168$.




\section[Acknowledgment]{Acknowledgment}
The author thanks Boris Furtula for
presenting the problem of characterizing graphs with minimal
atom-bond connectivity index and sharing 
initial information about it.
%
%
%

%

\end{document}